\newtheorem{proposition}{Proposition}
\newtheorem{proof}{Proof}
\newtheorem{mydef}{Definition}
\def\argmin{\mathop{\arg\min}}
\def\mb#1{\mathbf{#1}}
\def\nn{\nonumber}
\def\beq{\begin{equation}}
\def\eeq{\end{equation}}
\def\beqa{\begin{eqnarray}}
\def\eeqa{\end{eqnarray}}
\def\ie{{\it i.e.,\ \/}}
\def\nn{\nonumber}
\def\lam{\lambda}
\def\beq{\begin{equation}}
\def\eeq{\end{equation}}
\def\beqa{\begin{eqnarray}}
\def\eeqa{\end{eqnarray}}
\def\bmtx{\begin{bmatrix}}
\def\emtx{\end{bmatrix}}
\begin{document}

\title{\bf MDS Codes with Progressive Engagement Property for Cloud Storage Systems}

\author{ Mahdi Hajiaghayi\IEEEauthorrefmark{1}
and Hamid Jafarkhani\IEEEauthorrefmark{1} \\
\IEEEauthorblockA{\IEEEauthorrefmark{1}Center of Pervasive Communications and
Computing, University of California, Irvine \\Email: \{mahdi.h,
hamidj\}@uci.edu}}

\maketitle
\thispagestyle{empty}
\pagestyle{empty}

\begin{abstract}
Fast and efficient failure recovery is a new challenge for cloud storage systems with a large number of storage nodes. A pivotal recovery metric upon the failure of a storage node
is \textit{repair bandwidth cost} 
which refers to the amount of data that must be downloaded for regenerating the lost data. Since all the surviving nodes are not always accessible, we intend to introduce a class of  maximum distance separable (MDS) codes that can be re-used when the number of selected nodes varies yet yields close to optimal repair bandwidth. Such codes  provide flexibility in engaging more surviving nodes in favor of reducing the repair bandwidth without redesigning the code structure and changing the content of the existing nodes.
We call this property of  MDS codes  \textit{progressive
engagement}. This name comes from the fact that if a failure occurs, it is shown that the best strategy is to incrementally engage the surviving nodes according to their accessing cost (delay, number of hops, traffic load or availability in general) until the repair-bandwidth or accessing cost constraints are met.
We argue that the existing MDS codes
fail to satisfy the progressive engagement property. We subsequently present a search algorithm to find a new set of codes named \textit{rotation codes} that has both progressive engagement and MDS properties. Furthermore, we illustrate how the existing permutation codes can provide progressive engagement by modifying the original recovery scheme.  Simulation results are presented
to compare the repair bandwidth  performance of such codes when the number of participating nodes varies as well as their speed of single failure recovery.

\end{abstract}

\section{Introduction}

With the advent of the cloud storage systems, ordinary
users and enterprizes are increasingly moving their data to the cloud in
order to
have higher reliability, availability, and accessibility. These massive data are
distributed across a large number of storage nodes via a cloud file system and
preserved for long time, perhaps forever. Due to various incidents, ranging from
regional disasters such as earthquake, to hardware/software update on storage
nodes, the cloud file system  must be able to reconstruct the user data only
from a subset of storage nodes. Such capability can be obtained through an
efficient fault-tolerant system based on replication and redundancy.

Mirroring is a basic yet popular solution that is usually adopted by local storage systems.  This solution which is also reflected in levels $1-2$ of Redundant
Array of Inexpensive Disk (RAID) \cite{lale1} simply keeps an exact copy of the data in multiple,
usually three, storage nodes. Mirroring results in high availability but it
comes at the price of high redundancy.

 While mirroring may work for a local storage system, its hardware and
operation cost (power, cooling systems and maintenance) drastically increases
with the size of data, thus not efficient for a cloud systems.  Therefore,
cloud file systems are transitioning to adopt erasure codes \cite{LAB17} , e.g. Windows
Azure Storage \cite{huang2012erasure} and Google GFS \cite{google_gfs}. Maximum distance separable (MDS) codes  \cite{DimakisSurvey2011} are optimal among erasure codes in terms of
redundancy-reliability trade-off . An $(n,k)$ MDS code when deployed into storage systems can tolerate up to $(n-k)$ node failures without data
loss while requiring a much lower redundancy compared with mirroring.  For an $(n,k)$ MDS code to be applied to a symmetric storage system, we must divide a data
of size $M$ into $k$ blocks and encode them into $n$ equal-size encoded blocks and store them at $n$ storage nodes. In this case, the whole data can be reconstructed  from any $k$ nodes out
of the total $n$ nodes. Coding in this context consists of two parts: First, encoding that determines the content of each node and second, decoding that specifies the recovery scheme when a node failure happens. Obviously, the decoding process highly depends on the encoding part.
Several MDS codes such as Reed-Solomon \cite{reed1960polynomial}, row-diagonal
parity (RDP) \cite{corbett2004row}, EVENODD \cite{blaum1995evenodd}, X-code
\cite{xu1999x}, and WEAVER codes \cite{hafner2005weaver} were proposed and applied to the storage systems  in the
literature  with specific features and decoding complexity to protect data
against multiple disk failures.

Other codes such as low-density parity-check (LDPC) codes and related codes such as
LT codes \cite{Luby02ltcodes} and Raptor codes \cite{shokrollahi2006raptor} have
also been considered for distributed storage systems \cite{plank2003practical}.
It is shown that while these codes exhibit a considerably low decoding
complexity due to their XOR-based structure, they fail to hold the MDS property. To address this problem, \cite{wylie2007} and
\cite{wylie2011} attempt to find some XOR-based MDS codes using an exhaustive
search approach over all possible cases. Due to the exponential growing search
space, they could only provide such codes for limited cases ( $k < 20$, and
$(n-k)< 10$). The authors of \cite{oggier2011self} aim at minimizing the
number of nodes to be contacted for recovery while offering MDS property for the
codes with rate less than $1/2$. Locally repairable codes (LRC) \cite{papailiopoulos2014locally} also considers the same metric as the code design metric. 

The amount of data that must be downloaded
from $d$ surviving nodes to repair a single failure is called \textit{repair
bandwidth} \cite{WuDimakis2009}.
When a permanent failure occurs, the lost data must be recovered and
regenerated at a new node to maintain the data integrity.  The downside of the traditional MDS codes is
that  the entire data must be downloaded before recovering the lost data.
Hence, its repair bandwidth is $M$, albeit to only $M/k$ portion of the
original
data being lost in a failed node. This high repair bandwidth incurs
considerable
file transfer and traffic on the network. This has sparked interest in finding
the minimum repair bandwidth which in many cases has led to a new class of MDS
codes
\cite{WuDimakis2009},\cite{Diamakis2010}, \cite{Shah2010}, \cite{Viveck2011},\cite{ViveckJafarMaleki2010} \cite{tamo2013zigzag}. Such codes have been proposed for some specific combinations of $(n,k,d)$ where $d$ is the number of surviving nodes participating in the recovery.
 All these newly proposed MDS codes require less repair bandwidth than $M$ and this bandwidth decreases as  more nodes are engaged for recovery. However, both encoding and decoding parts of these codes depend on $d$ and it must be known before designing the code. Therefore, for these MDS codes the coding structure and basically the content of nodes change as $d$ changes.
This is not practically appealing for two reasons. First, we may not know in advance how many nodes are available to participate for the recovery.
Second, when a new node is added to the storage system, the content of the existing nodes must be changed for the optimal repair bandwidth performance.

In this paper, we intend to introduce a family of  MDS codes that while requiring low repair bandwidth it can
be reused when the number of participating nodes varies without having to change
the entire code structure and the content of the nodes. This property which we call
\textit{progressive engagement}  provides flexibility in engaging
more surviving nodes in favor of reducing the repair bandwidth without
redesigning the code structure.
This name stems from a proven fact that if a failure occurs, the best strategy to recover the lost data in the presence of node access cost is to incrementally engage the surviving nodes according to their accessing cost (delay, number of hops, traffic load and availability in general) until the repair bandwidth or accessing cost constraints are met.
Progressive engagement property has another implication that allows for adding
new storage nodes to the system without changing the contents of the existing
nodes.

A similar approach to the progressive engagement can be seen in rate-compatible
punctured codes \cite{Error_control_book} for communication systems. Such codes
allow the transmission of information with different levels of protection, at
various rates without redesigning the code.

  We argue that the existing MDS codes fail to
satisfy this pivotal property. In traditional MDS codes such as Reed-Solomon
codes, engaging more nodes does not lead to the reduction of repair bandwidth.
On the other hand, progressive engagement is not satisfied in  the existing MDS
codes designed for reducing repair bandwidth \cite{rashmi2011optimal},
\cite{Shah2010},  \cite{tamo2013zigzag} since their
structure vary with the number of participating
nodes. This fact renders them non-practical for many systems with unknown number
of participating nodes in advance.

Our proposed coding solution differs from the regenerative codes \cite{Changho2010}  \cite{rashmi2011optimal} \cite{prakash2015storage} \cite{tian2013rate} \cite{rashmi2015having} as well. The structures of these codes also depend on the number of participating nodes for recovery.  The focus of regenerative codes are on a tradeoff  between the storage cost (amount of data is being stored in each node) and the repair bandwidth. In one side of this tradeoff, the \textit{Minimum Storage Regenerating} (MSR) codes seek the optimal repair bandwidth for minimum storage cost (storing $M/k$ amount of data in each node).  \cite{rashmi2015having} and \cite{rashmi2014hitchhiker} considered the problem of I/O overhead in addition to the repair bandwidth metric and proposed a solution that significantly reduces the I/O access. While the I/O access cost is not the focus of our paper, the proposed designs in these papers fail to work when the number of available nodes, \textit{helpers}, changes. 

 We  introduce a new class of MDS codes using a search algorithm that we call rotation codes with progressive engagement property. Moreover, we illustrate how the original recovery (decoding) scheme of the existing permutation code featured in \cite{Viveck2011} can be modified to accommodate the progressive engagement property.

The rest of the paper is organized as follows.
Section~\ref{system_model} describes the system model and state the problem of existing MDS codes with optimal repair bandwidth. The concept of progressive engagement for the MDS codes  is introduced in Section~\ref{progressive_engagement} along with its rigorous definition. Moreover, two examples of MDS codes with progressive engagement property are
introduced in this section. Numerical results and conclusions are provided in
Sections~\ref{numerical_sec} and
~\ref{conclusion_sec}, respectively.

\section{System Model and Problem Formulation} \label{system_model}
We consider a \textbf{symmetric}  storage system consisting of an array of total $n$ disks or
nodes\footnote{We use node and disk interchangeably throughout this paper.},
among which $k$ systematic nodes and $n-k=m$ parity nodes. Systematic nodes hold
the original data. Suppose a file of size $M$ is evenly partitioned into $k$
parts, each of size $L= \frac{M}{k}$ blocks, and stored in the $k$ systematic
nodes. Each block also consists of a fixed number of  $w$-bit symbols.
The content of parity nodes is calculated from the original data using an
erasure encoder. The coding and decoding operations are conducted over the Galois Field $2^{w}$, where $n< 2^w
-1$. The common values for $w$ are $8,16, 32$ depending on the size of the
network. We assume an erasure code $(n,k)$ with MDS property meaning
that any $k$ nodes suffice to recover the original data. This also means that up
to $(n-k)$ node failures can be tolerated without data loss. An example of such
system with $k=2$ systematic nodes S-node 1 and S-node 2, $m=2$ parity nodes
P-node 1 and P-node 2, and $L=2$ blocks is depicted in Fig.~\ref{exactrepair}.

 \begin{figure}[h]
 \centering
 \includegraphics[scale=.7]{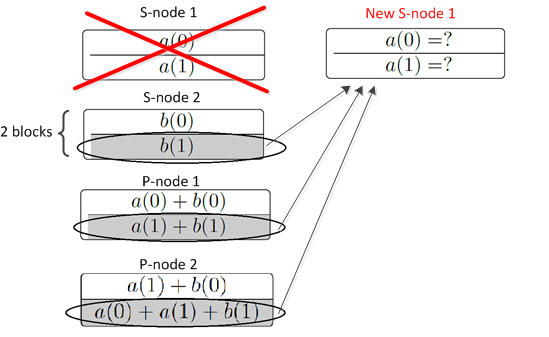}
 \caption{Exact repair for $n=4$, $k=2$, and $L=2$ blocks  ($M=4$ blocks)}
 \label{exactrepair}
 \end{figure}

To maintain the data integrity, when a node fails, the lost data must be
recovered and saved into a new node. For better exposition of this issue, assume
the first node fails in a system with $n=4$,  $k=2$ $L=2$ \cite{dimakis2011survey}. Using a traditional MDS
code, the new node could contact any 2 nodes and download $4$ blocks of data,
from which, $a(0)$ and $a(1)$ can be obtained. However, with the encoding scheme
shown in Fig. \ref{exactrepair}, it is also possible to do the following decoding to recover the lost data in Node 1: we can download only three
blocks $b(1)$; $a(1) + b(1)$; $a(0) +a(1) + b(1)$, from the three surviving
nodes, and attain $a(0)$ and $a(1)$ via solving a linear equation. In this case,
the repair bandwidth, defined as the minimum amount of data that must be
downloaded to recover the lost data, is $3$ blocks as opposed to $4$ for the
traditional MDS code.

Several work exist  that propose MDS codes  with
minimum repair bandwidth \cite{rashmi2011optimal}, \cite{Shah2010},
\cite{Viveck2011}, \cite{tian2013exact}, \cite{Changho2010},  \cite{rashmi2011optimal}, \cite{prakash2015storage}.
Let $d$ denote the number of surviving nodes participating
in the recovery process.
For these codes to work, $d$ is a parameter that must be known when designing a code. Moreover, most of these designs only exist  for a particular $d$ with $d= n-1$ being the most common case. In this case, if $d$ decrements by one (for example, a node becomes unavailable), in order to obtain the repair-bandwidth gain, a new code with $(n',k',d') = (n-1,k,d-1)$ must be loaded to the storage nodes, as if the unavailable node had not existed in the first place. The structures of these two codes can be quite different. To shed a light on this problem, consider the permutation code in \cite{Viveck2011} for the case $(n,k,d) = (5,2,4)$ in Table \ref{permutation_code}. If $d$ is set to $3$ in this code, the specific recovery scheme in \cite{Viveck2011} no longer works and we have to change the whole content of the parity nodes to that of the $(n',k',d')=(4,2,3)$ in Table \ref{permutation_code_4_2} to be able to recover the lost data. As can be seen, these two codes are completely different. Such a recalculation and reloading of the content cannot be tolerated in many practical cases. The parameters $\lam_i$'s in Tables  \ref{permutation_code} and  \ref{permutation_code_4_2}  are chosen so that the MDS properties are satisfied in these codes.

In this paper we consider a single failure scenario and assume that all the $k-1$ remaining systematic nodes participate for recovering the lost data. We denote the repair bandwidth by  $\gamma(p)$ where $p = d- (k-1)$ for a given coding scheme. It is noted that most of our definitions and proposed schemas are applicable to the general case where any $d$ nodes are selected for recovery.
\begin{table*}
\caption{Permutation code construction for $(n,k,d) = (5,2,4)$. The gray and green blocks are used in Section~\ref{pe_permutation} to describe the recovery scheme. }
\label{permutation_code}
\parbox{.11\linewidth}{
\centering
\caption*{S-node 1} 

\begin{tikzpicture}
\node (table) [inner sep=0pt] {
\begin{tabular}{c}
$a(0)$\\
\hline
$a(1)$ \\
\hline
$a(2)$ \\
\hline
$a(3)$ \\
\hline
$a(4) $\\
\hline
$a(5)$ \\
\hline
$a(6)$ \\
\hline
$a(7)$ \\
\hline
$a(8)$ \\
\end{tabular}
};
\draw [rounded corners=.2em] (table.north west) rectangle (table.south east);
\end{tikzpicture}
}
\hfill
\parbox{.11\linewidth}{
\centering
\caption*{S-node 2} 
\begin{tikzpicture}
\node (table) [inner sep=0pt] {
\begin{tabular}{c}
\cellcolor[gray]{.8} $b(0)$\\
\hline
\cellcolor[gray]{.8}$b(1)$ \\
\hline
\cellcolor[gray]{.8}$b(2)$ \\
\hline
\cellcolor[rgb]{0.5,1,0}$b(3)$ \\
\hline
\cellcolor[rgb]{0.5,1,0}$b(4)$ \\
\hline
\cellcolor[rgb]{0.5,1,0}$b(5)$ \\
\hline
$b(6) $\\
\hline
$b(7)$ \\
\hline
$b(8)$ \\
\end{tabular}
};
\draw [rounded corners=.2em] (table.north west) rectangle (table.south east);
\end{tikzpicture}
}
\hfill
\parbox{.15\linewidth}{
\centering
\caption*{P-node 1} 
\begin{tikzpicture}
\node (table) [inner sep=0pt] {
\begin{tabular}{c}
\cellcolor[gray]{.8}$\lam_1a(0) + \lam_2b(0)$\\
\hline
\cellcolor[gray]{.8}$\lam_1a(1) + \lam_2b(1)$\\
\hline
\cellcolor[gray]{.8}$\lam_1a(2) + \lam_2b(2)$\\
\hline
$\lam_1a(3) + \lam_2b(3)$\\
\hline
$\lam_1a(4) + \lam_2b(4)$\\
\hline
$\lam_1a(5) + \lam_2b(5)$\\
\hline
$\lam_1a(6) + \lam_2b(6)$\\
\hline
$\lam_1a(7) + \lam_2b(7)$\\
\hline
$\lam_1a(8) + \lam_2b(8)$\\
\end{tabular}
};
\draw [rounded corners=.2em] (table.north west) rectangle (table.south east);
\end{tikzpicture}
}
\hfill
\parbox{.15\linewidth}{
\centering
\caption*{P-node 2} 
\begin{tikzpicture}
\node (table) [inner sep=0pt] {
\begin{tabular}{c}
\cellcolor[gray]{.8}$\lam_1^2a(3) + \lam_2^2b(1)$\\
\hline
\cellcolor[gray]{.8}$\lam_1^2a(4) + \lam_2^2b(2)$\\
\hline
\cellcolor[gray]{.8}$\lam_1^2a(5) + \lam_2^2b(0)$\\
\hline
\cellcolor[rgb]{0.5,1,0}$\lam_1^2a(6) + \lam_2^2b(4)$\\
\hline
\cellcolor[rgb]{0.5,1,0}$\lam_1^2a(7) + \lam_2^2b(5)$\\
\hline
\cellcolor[rgb]{0.5,1,0}$\lam_1^2a(8) + \lam_2^2b(3)$\\
\hline
$\lam_1^2a(0) + \lam_2^2b(7)$\\
\hline
$\lam_1^2a(1) + \lam_2^2b(8)$\\
\hline
$\lam_1^2a(2) + \lam_2^2b(6)$\\
\end{tabular}
};
\draw [rounded corners=.2em] (table.north west) rectangle (table.south east);
\end{tikzpicture}
}
\hfill
\parbox{.15\linewidth}{
\centering
\caption*{P-node 3} 
\begin{tikzpicture}
\node (table) [inner sep=0pt] {
\begin{tabular}{c}
\cellcolor[gray]{.8}$\lam_1^3a(6) + \lam_2^3b(2)$\\
\hline
\cellcolor[gray]{.8}$\lam_1^3a(7) + \lam_2^3b(0)$\\
\hline
\cellcolor[gray]{.8}$\lam_1^3a(8) + \lam_2^3b(1)$\\
\hline
$\lam_1^3a(0) + \lam_2^3b(5)$\\
\hline
$\lam_1^3a(1) + \lam_2^3b(3)$\\
\hline
$\lam_1^3a(2) + \lam_2^3b(4)$\\
\hline
$\lam_1^3a(3) + \lam_2^3b(8)$\\
\hline
$\lam_1^3a(4) + \lam_2^3b(6)$\\
\hline
$\lam_1^3a(5) + \lam_2^3b(7)$\\
\end{tabular}
};
\draw [rounded corners=.2em] (table.north west) rectangle (table.south east);
\end{tikzpicture}
}
\end{table*}

\begin{table*}


\caption{Permutation code construction for $(n,k,d) = (4,2,3)$}
\label{permutation_code_4_2}
\parbox{.11\linewidth}{
\centering
\caption*{S-node 1} 

\begin{tikzpicture}
\node (table) [inner sep=0pt] {
\begin{tabular}{c}
$a(0)$\\
\hline
$a(1)$ \\
\hline
$a(2)$ \\
\hline
$a(3)$ \\
\end{tabular}
};
\draw [rounded corners=.2em] (table.north west) rectangle (table.south east);
\end{tikzpicture}
}
\hfill
\parbox{.11\linewidth}{
\centering
\caption*{S-node 2} 
\begin{tikzpicture}
\node (table) [inner sep=0pt] {
\begin{tabular}{c}
$b(0)$\\
\hline
$b(1)$ \\
\hline
$b(2)$ \\
\hline
$b(3)$ \\
\end{tabular}
};
\draw [rounded corners=.2em] (table.north west) rectangle (table.south east);
\end{tikzpicture}
}
\hfill
\parbox{.15\linewidth}{
\centering
\caption*{P-node 1} 
\begin{tikzpicture}
\node (table) [inner sep=0pt] {
\begin{tabular}{c}
$\lam_1a(0) + \lam_2b(0)$\\
\hline
$\lam_1a(1) + \lam_2b(1)$\\
\hline
$\lam_1a(2) + \lam_2b(2)$\\
\hline
$\lam_1a(3) + \lam_2b(3)$\\

\end{tabular}
};
\draw [rounded corners=.2em] (table.north west) rectangle (table.south east);
\end{tikzpicture}
}
\hfill
\parbox{.15\linewidth}{
\centering
\caption*{P-node 2} 
\begin{tikzpicture}
\node (table) [inner sep=0pt] {
\begin{tabular}{c}
$\lam_1^2a(2) + \lam_2^2b(1)$\\
\hline
$\lam_1^2a(3) + \lam_2^2b(0)$\\
\hline
$\lam_1^2a(0) + \lam_2^2b(3)$\\
\hline
$\lam_1^2a(1) + \lam_2^2b(2)$\\
\end{tabular}
};
\draw [rounded corners=.2em] (table.north west) rectangle (table.south east);
\end{tikzpicture}
}

\end{table*}

In the next section, we will introduce the concept of `progressive engagement' to address the problem stated above.

\section{Progressive Engagement Property}  \label{progressive_engagement}
We aim to design a family of MDS codes $(n,k,d)$ for $d=k,\cdots, n-1$ with the same encoding that while achieving low repair bandwidth it can be used when $d$ (or equivalently $p$) is not priori known. We call this property of MDS codes \textit{progressive engagement}.
An MDS code with progressive engagement  property provides flexibility in choosing the optimal set of participating nodes without having to change the coding scheme and the content of the storage nodes.
 We provide the following formal definition
for such codes.

\begin{mydef} \label{defi}
A family of MDS codes $(n,k,d)$, where $d=k,\cdots, n-1$, with $L$ blocks provides the progressive engagement property if
\begin{enumerate}
\item The encoder takes $k L$ blocks, from the $k$ systematic nodes, and provides $(n-k) L$ parity blocks for the $(n-k)$ parity nodes.
\item The decoder {\bf can} access $d L$ blocks from $d=k,\cdots, n-1$ surviving nodes and generate the $L$ missing blocks.
\item Let us define $p=d-k+1$, where $p=1,\cdots,n-k$. The decoder will only {\bf use} $\gamma(p)\leq d L$ blocks of data to generate the $L$ missing blocks.
\item For $p<p^\prime$, we have $\bar{\gamma}(p) < \bar{\gamma}(p^\prime)$ where $\bar{\gamma}$ is the average of repair bandwidth taken over all $p$ participating parity nodes for the recovery.
\end{enumerate}
\end{mydef}
For any value of $d$ (correspondingly  $p$ with the assumption that all systematic surviving nodes participate for recovery), we hope that such a code would require as minimum repair
bandwidth as if it were designed for that particular $d$. However, even if that
minimum bandwidth is not achieved for all $d$, the MDS codes with progressive
engagement property are still practically more amenable  over the codes that
only work for a fixed $d$.

While the progressive engagement property is practically appealing from many aspects as mentioned earlier, it is also motivated from the following proposition.

\begin{proposition}
Once a failure occurs, the best strategy to minimize a weighted sum of the  {\em accessing cost} and the repair bandwidth to recover the lost data is to incrementally engage the parity nodes  according to their accessing costs until both repair bandwidth and total accessing cost constraints are met.
\end{proposition}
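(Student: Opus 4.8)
The plan is to recast the node-selection problem as a discrete optimization over subsets of the surviving parity nodes and then show, by an exchange argument, that its optimal solution is exactly the cost-sorted incremental schedule claimed in the proposition. First I would fix notation: following the model of Section~\ref{system_model}, all $k-1$ surviving systematic nodes are engaged, so the only decision is which parity nodes to contact. Index the surviving parity nodes by $\{1,\dots,n-k\}$, let $c_i\ge 0$ be the accessing cost of node $i$, and order them so that $c_1\le c_2\le\cdots\le c_{n-k}$. For an engaged subset $S$, the total accessing cost is the additive quantity $A(S)=\sum_{i\in S}c_i$, while the repair bandwidth is $\Gamma(S)=\gamma(|S|)$; the essential structural input I would invoke is that, for an MDS code with the progressive-engagement property of Definition~\ref{defi}, this bandwidth depends on $S$ only through its cardinality $p=|S|$ and is monotone in $p$ (engaging more parity nodes does not increase the per-node download). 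The objective is the weighted sum $J(S)=w_A A(S)+w_\Gamma\Gamma(S)$, minimized subject to $\Gamma(S)\le\Gamma_{\max}$ and $A(S)\le A_{\max}$.

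The core step is a cardinality-preserving exchange argument. Suppose $S$ is any feasible subset with $|S|=p$ that is \emph{not} a cost-sorted prefix; then there exist indices $i<j$ with $j\in S$, $i\notin S$, and $c_i\le c_j$. Replacing $j$ by $i$ yields a set $S'$ of the same size $p$, hence identical repair bandwidth $\gamma(p)$, and with $A(S')\le A(S)$; consequently $J(S')\le J(S)$ and every constraint satisfied by $S$ still holds for $S'$. Iterating this swap reduces any candidate to the prefix $S_p^\star=\{1,\dots,p\}$ without loss of optimality, so the feasible search collapses to the $n-k+1$ nested prefixes $S_0^\star\subset S_1^\star\subset\cdots\subset S_{n-k}^\star$. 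This is precisely the statement that one should engage parity nodes in nondecreasing order of accessing cost.

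It then remains to justify the incremental stopping rule along this prefix chain. Writing $f(p)=J(S_p^\star)=w_A\sum_{i=1}^{p}c_i+w_\Gamma\gamma(p)$, I would note that the partial sums $\sum_{i=1}^p c_i$ are nondecreasing in $p$ while $\gamma(p)$ is nonincreasing, so the accessing-cost constraint holds for all $p\le p_{\max}$ and the bandwidth constraint holds for all $p\ge p_{\min}$; the feasible cardinalities therefore form an interval $[p_{\min},p_{\max}]$. Starting from the empty set and appending the next-cheapest surviving parity node one at a time, the process first drives the bandwidth down to its target at $p_{\min}$ and, by feasibility, remains within the cost budget; the optimizer is the $p$ in this window minimizing $f$, which the incremental procedure reaches and at which point adding a further node no longer decreases $f$, i.e.\ the constraints are met. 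Since the candidates are generated exactly by appending the cheapest not-yet-used node, this is the incremental engagement described in the statement.

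The main obstacle is the structural premise underlying the exchange step: that the repair bandwidth is a function of the \emph{number} of engaged parity nodes rather than their identities. This is exactly the symmetry furnished by the progressive-engagement construction, and it is what decouples the node-specific (additive) accessing cost from the count-dependent bandwidth; without it the problem would degenerate into a knapsack-type combinatorial optimization with no greedy guarantee. Secondary care is needed only to confirm that the feasible window $[p_{\min},p_{\max}]$ is nonempty, so that some schedule meets both constraints, and that ties among the $c_i$ are broken arbitrarily, with each tie-breaking yielding an equally optimal prefix.
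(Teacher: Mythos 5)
Your proposal is correct and follows essentially the same route as the paper's Appendix~\ref{optimization_section}: both exploit the key fact that the repair bandwidth depends on the engaged parity set only through its cardinality $p$, so that for fixed $p$ the optimal set is the $p$ cheapest nodes (the paper asserts this via the $\argmin_p$ sort, you justify it with an explicit exchange argument), and then both reduce the problem to a one-dimensional search over the nested cost-sorted prefixes $p=1,\dots,n-k$.
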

\begin{proof}The proof is provided in Appendix \ref{optimization_section} \end{proof}

The advantage of the MDS codes with progressive engagement property over the
traditional MDS codes such as Reed-Solomon codes is the capability of reducing
repair bandwidth by involving more participating nodes. In fact, traditional MDS
codes fail to satisfy the second condition of Definition \ref{defi}, while the
recently proposed codes that yield minimum repair bandwidth fail to meet the
first condition. For the codes \cite{rashmi2011optimal}, \cite{Shah2010},
\cite{Viveck2011}, \cite{tian2013exact} the number of participating nodes must
be known when designing the codes. As $d$ may not be known a priori, it is a
great advantage of the MDS codes with progressive engagement property that the codes can
be used when $d$ varies without changing the code structure or the
content of the nodes.

 An analogy exists between the codes with progressive engagement property and rate-compatible punctured error
correcting codes \cite{Error_control_book}. The rate-compatible punctuated codes are very important in wireless communication since they allow transmitting information with various rates without redesigning the code while the code is as
efficient as if it were designed for the respective rate. Similarly, the progressive engagement property ensures to recover the lost data using any number of surviving nodes without redesigning the codes.

In what follows, we will provide two examples of MDS codes with progressive
engagement property.

\subsection{Progressive Engagement in Rotation Codes}

In this section, we propose a class of MDS codes that satisfies the progressive engagement property and design the corresponding encoding and decoding schemes.
We propose a computer search to find a code with the lowest repair bandwidth when different parity nodes participate in a single failure recovery. To make the computer search feasible, we reduce the search space by constraining the encoder to be a member of a family of codes that we call \textit{rotation codes}. A family of  $(n,k,d)$, $d=k,\cdots,n-1$, rotation codes with length $L$ is defined as follows. Since our proposed code is independent of $d$, we may skip $d$ in the code representation parameters. Let us define $\mb{a}_i,\ i=1,\cdots,k$ as the $k$ systematic node vectors. Then, the $n-k$ parity node vectors are defined as
\begin{align}
\mb{p_j} = \sum_{i=1}^{k} \lambda_{ij} \mb{a}_i \mb{R_{ij}}, \quad j=1,\cdots, n-k
\label{rotation_code}
\end{align}
where the coefficients $\lambda_{ij}$ are chosen to ensure the MDS property and $\mb{R_{ij}}$ is a rotation matrix defined below. Consider $\mb{I_L}=[\mb{e_1};\mb{e_2};\cdots;\mb{e_L}]$ as the $L\times L$ identity matrix, where $\mb{e_l}$ is a row vector of length $L$ with $1$ in the position $l$ and $0$ in every other positions.  Also, we define $\mb{R_{l}}=[\mb{e_{l+1}};\mb{e_{l+2}};\cdots;\mb{e_L};\cdots;\mb{e_{l}}]$ by cyclically rotating $I_L$'s rows $l$ times. Then, $\mb{R_{ij}}=\mb{R_{l}}$ for some $l$.

To search for a good code, one can try all possible $\mb{R_{ij}}$. Note that there are $k(n-k)$ positions for $\mb{R_{ij}}$ and the maximum complexity for a full search is $L^{k(n-k)}$ and finite. However, usually $\mb{R_{i1}}=\mb{R_{1j}}=\mb{I_L},\ i=1,\cdots,k,\ j=1,\cdots,n-k$ and one can use other symmetrical properties to further reduce the search space.
For a given set of rotation matrices $\mb{R_{ij}}$, we choose $\lambda_{ij}$ parameters to make the code an MDS code. This can be done by writing the MDS constraint equations and solving them for one set of $\lambda_{ij}$ parameters. The solution is not unique, but any choice of $\lambda_{ij}$ parameters as long as the code is an MDS code is acceptable.
The next step is to check if the code provides the progressive engagement property. To check the property, we consider all possible single failure cases and all possible involved parity nodes and calculate the corresponding repair bandwidth. If the repair bandwidth values are a decreasing function of the number of surviving nodes, i.e. involved parity nodes, the code is acceptable. In other words, for each code, we need to prove that in the case of single failure, the lost data can be re-constructed using any number of parity nodes.
Since the above rotation code is an XOR-based code, we can employ the algorithm proposed in \cite{khan2012rethinking} for single failure recovery while achieving low repair bandwidth. This algorithm aims to minimize the repair bandwidth regardless of $\lam_i$'s as long as they are non-zero. More specifically, for any systematic node failure and any participating set of surviving nodes, this algorithm constructs a directed weighted graph in which the shortest path between the root and any leaf (all leaves are connected) would determine the data blocks that must be downloaded and the minimum repair bandwidth. The height of this tree is equal to the number of blocks in the failed node. Using this algorithm, one can verify if both conditions of the progressive engagement property are satisfied for each choice of the code in \eqref{rotation_code}.
Note that the repair bandwidth resulted from this algorithm may not be the minimum but it is usually close to the minimum.
To describe the details, we concentrate on an example for $(n,k) = (6,3)$ rotation code with $L=4$ as a candidate:
\begin{align}
\mb{p_1} &= \mb{a} + \mb{b} + \mb{c} \nn \\
\mb{p_2} &= \mb{a} + \lambda_1\mb{R_1b} + \lambda_2\mb{R_3c} \nn \\
\mb{p_3} &= \mb{a} + \lambda_1^2\mb{R_2b} + \lambda_2^2\mb{R_1c}
\label{rotation_code63}
\end{align}
where
$\mb{R_1}= [\mb{e_2};\mb{e_3};\mb{e_4};\mb{e_1}], \mb{R_2} = [\mb{e_3}; \mb{e_4}; \mb{e_1}; \mb{e_2}]$ and $\mb{R_3} =[\mb{e_4};\mb{e_1};\mb{e_2};\mb{e_3}]$ are the rotation matrices with $\mb{e_i}$ being a row vector of length $4$ with $1$ in the position $i$ and $0$ in every other positions. The coefficients $\lam_1$ and $\lam_2$ in \eqref{rotation_code63} are chosen to ensure the MDS property. For  $\lam_1=2$ and $\lam_2 =3$  this property is explicitly proven in Appendix \ref{mds_property}.
The resulting structure is depicted in Table \ref{rotation_example}.
Next, we show that the code in Table \ref{rotation_example} provides the progressive engagement property.
As explained before, we use the algorithm proposed in \cite{khan2012rethinking}  to recover the single failure with minimum repair bandwidth.
The result of this algorithm is presented in Table \ref{Node1FailureRecovery} when S-node 1 fails and for all the sets of participating parity nodes.

\begin{table*}
\caption{Rotation code construction for $(n,k) = (6,3)$ and $L=4$ with progressive engagement property}
\label{rotation_example}
\parbox{.11\linewidth}{

\caption*{S-node 1} 
\begin{tikzpicture}
\node (table) [inner sep=0pt] {
\begin{tabular}{c}
$a(0)$\\
\hline
$a(1)$ \\
\hline
$a(2)$ \\
\hline
$a(3)$ \\

\end{tabular}
};
\draw [rounded corners=.2em] (table.north west) rectangle (table.south east);
\end{tikzpicture}
}
\hfill
\parbox{.11\linewidth}{
\caption*{S-node 2} 
\begin{tikzpicture}
\node (table) [inner sep=-0pt] {
\begin{tabular}{c}
$b(0)$\\
\hline
$b(1)$ \\
\hline
$b(2)$ \\
\hline
$b(3)$ \\
\end{tabular}
};
\draw [rounded corners=.2em] (table.north west) rectangle (table.south east);
\end{tikzpicture}
}
\hfill
\parbox{.11\linewidth}{

\caption*{S-node 3} 
\begin{tikzpicture}
\node (table) [inner sep=0pt] {
\begin{tabular}{c}
$c(0)$\\
\hline
$c(1)$ \\
\hline
$c(2)$ \\
\hline
$c(3)$ \\
\end{tabular}
};

\draw [rounded corners=.2em] (table.north west) rectangle (table.south east);
\end{tikzpicture}
}
\hfill
\parbox{.15\linewidth}{
\centering
\caption*{P-node 1} 
\begin{tikzpicture}
\node (table) [inner sep=0pt] {
\begin{tabular}{c}
$a(0) + b(0) +  c(0)$\\
\hline
$a(1) + b(1) +  c(1)$\\
\hline
$a(2) + b(2) +  c(2)$\\
\hline
$a(3) + b(3)+ c(3)$\\
\end{tabular}
};
\draw [rounded corners=.2em] (table.north west) rectangle (table.south east);
\end{tikzpicture}
}
\hfill
\parbox{.15\linewidth}{
\centering
\caption*{P-node 2} 
\begin{tikzpicture}
\node (table) [inner sep=0pt] {
\begin{tabular}{c}
$a(0) + 2b(1) + 3c(3)$\\
\hline
$a(1) + 2b(2) + 3 c(0)$\\
\hline
$a(2) + 2b(3) + 3 c(1)$\\
\hline
$a(3) + 2b(0) + 3c(2)$\\
\end{tabular}
};
\draw [rounded corners=.2em] (table.north west) rectangle (table.south east);
\end{tikzpicture}
}
\hfill
\parbox{.15\linewidth}{
\centering
\caption*{P-node 3} 
\begin{tikzpicture}
\node (table) [inner sep=0pt] {
\begin{tabular}{c}
$a(0) + 2^2b(2) + 3^2c(1)$\\
\hline
$a(1) + 2^2b(3) + 3^2c(2)$\\
\hline
$a(2) + 2^2b(0) + 3^2c(3)$\\
\hline
$a(3) + 2^2b(1) + 3^2c(0)$\\

\end{tabular}
};
\draw [rounded corners=.2em] (table.north west) rectangle (table.south east);
\end{tikzpicture}
}
\\

\end{table*}

\begin{table*}
\caption{Decoding strategy for a $(n,k) = (6,3)$ code with $L=4$ when node S-1 fails.}
\label{Node1FailureRecovery}
\begin{tabular}{|c|c|c|}
\hline
Accessible parity set & Optimal parity blocks downloaded & Repair bandwidth\\
\hline
P-node1 or P-node2 or P-node3 & all blocks in the selected parity node & $\gamma(p=1) = 12 $ \\
\hline
(P-node1,P-node2) & $p_2(3),p_2(1),p_1(2),p_1(0),c(0),c(2),b(0),b(2) $ & $\gamma(p=2) = 8 $ \\
\hline
(P-node1,P-node3) & $p_3(3),p_3(2),p_1(0),p_1(1),c(0),c(1),c(3),b(0),b(1) $ & $\gamma(p=2) = 9 $ \\
\hline
(P-node2,P-node3) & $p_2(3),p_2(2),p_2(1),p_2(0),c(0),c(3),b(0),b(1),b(2) $ & $\gamma(p=2) = 9 $ \\
\hline
(P-node1, P-node2,P-node3) & $p_2(3),p_2(1),p_1(2),p_1(0),c(0),c(2),b(0),b(2) $ & $\gamma(p=3) = 8 $ \\
\hline

\end{tabular}
\end{table*}
As can be seen in Table \ref{Node1FailureRecovery}, there are two sets with the same number of participating nodes that yield different repair bandwidths. Taking average over all failure and participating set cases, we have
\begin{align}
 \bar{\gamma} (p=1) = 12, \quad  \bar{\gamma} (p=2) = 8.66, \quad  \bar{\gamma} (p=3) = 8 \nn
\end{align}
It is evident that the rotation code in \eqref{rotation_code63} provides the progressive engagement property as the number of participating parity nodes increases, the repair bandwidth drops.

\subsection{Progressive Engagement in Permutation Code} \label{pe_permutation}

In this section, we consider the code structure based on the permutational
matrix featured in \cite{Viveck2011}. This code with its original recovery
scheme (decoding) cannot meet the progressive engagement property in Definition
\ref{defi}. However, we will describe how the recovery scheme can be modified
to allow the flexible engagement of parity
nodes.

\begin{proposition} \label{propos1}
The permutation code featured in \cite{Viveck2011} with the modified recovery
scheme described below provides the progressive engagement property. Furthermore, the repair bandwidth
cost for a given $p$ is
\begin{align}
&\gamma(p) = kL - \frac{L}{n-k} (p-1)(k-1), \label{repairbw}
\end{align}
where $L= (n-k)^k$.
\end{proposition}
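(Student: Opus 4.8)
The plan is to make the modified decoder fully explicit, count the blocks it downloads, and then read conditions 1--4 of Definition \ref{defi} off the resulting closed form. First I would fix notation matching \cite{Viveck2011}: with $L=(n-k)^k$, index the $L$ symbols of each systematic node by vectors $\mathbf{v}\in\mathbb{Z}_{n-k}^{k}$, so that systematic node $i$ stores $\{a_i(\mathbf{v})\}$ and parity node $j$ (for $j=1,\dots,n-k$) stores $p_j(\mathbf{v})=\sum_{i=1}^{k}\lambda_i^{\,j}\,a_i(\mathbf{v}+(j-1)\mathbf{e}_i)$, where $\mathbf{e}_i$ is the $i$-th unit vector and the shift is taken modulo $n-k$ in the $i$-th coordinate. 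By the cyclic symmetry of this construction it suffices to analyse the repair of systematic node $1$ using an arbitrary set of $p$ parity nodes, and I would carry out the argument for node $1$ with parity nodes $1,\dots,p$, the general case following verbatim.

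Next I would present the modified decoder as two stages and justify each. In the aligned stage, fix the failed coordinate to $0$ and, from each chosen parity node $j$, download the $L/(n-k)$ blocks $\{p_j(0,v_2,\dots,v_k)\}$. Their desired terms are $a_1(j-1,v_2,\dots,v_k)$, so across the $p$ nodes they cover exactly the symbols of $a_1$ whose first coordinate lies in $\{0,\dots,p-1\}$, that is $pL/(n-k)$ of them. The key observation --- the heart of the argument --- is that the interference contributed by each surviving systematic node $i\ge 2$ is $a_i(0,\dots,v_i+(j-1),\dots)$, whose index ranges over the \emph{single} set $\{a_i(0,\mathbf{w}):\mathbf{w}\in\mathbb{Z}_{n-k}^{k-1}\}$ independently of $j$; hence the interference aligns and is cancelled by downloading just $(k-1)L/(n-k)$ systematic blocks. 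In the residual stage, the remaining $(n-k-p)L/(n-k)$ symbols of $a_1$ (first coordinate in $\{p,\dots,n-k-1\}$) are recovered one at a time from the identity parity node $1$: each $a_1(\mathbf{v})$ costs one parity block $p_1(\mathbf{v})$ plus $k-1$ systematic blocks $\{a_i(\mathbf{v})\}_{i\ge 2}$ to cancel interference, and since these indices have nonzero first coordinate they never coincide with blocks used in the aligned stage. Solvability in both stages is immediate because $\lambda_i\neq 0$, so each unknown is obtained by a single subtraction and scaling rather than a matrix inversion.

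Adding the three contributions yields the claimed formula:
\begin{align}
\gamma(p) &= \frac{pL}{n-k} + \frac{(k-1)L}{n-k} + k\,\frac{(n-k-p)L}{n-k} \nonumber \\
&= \frac{L}{n-k}\big[\,p+(k-1)+k(n-k-p)\,\big] = kL - \frac{L(k-1)}{n-k}(p-1), \nonumber
\end{align}
the last simplification being routine algebra. I would sanity-check the endpoints: $p=1$ gives $\gamma=kL$ (download everything), while $p=n-k$ gives $L(n-1)/(n-k)$, the minimum-storage-regenerating value for $d=n-1$, confirming optimality at full engagement.

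Finally I would read off Definition \ref{defi}. Conditions 1--2 are structural and hold by construction of the encoder/decoder dimensions. Condition 3 follows since $\gamma(p)<kL\le dL$ for every $p$, because $d=p+k-1\ge k$. For condition 4, the count above is independent of which $p$ parity nodes are engaged and of which systematic node fails --- this is exactly where the cyclic symmetry is used --- so the average $\bar\gamma(p)$ equals $\gamma(p)$, which is strictly decreasing in $p$ whenever $k>1$; hence $\bar\gamma(p)<\bar\gamma(p')$ for $p<p'$. The main obstacle is the aligned stage: one must verify rigorously that fixing the failed coordinate collapses each surviving systematic node's interference to a single $L/(n-k)$-block set independent of $j$, while the desired symbols still cover precisely $p$ distinct shift values, and that the residual downloads are disjoint from the aligned ones. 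This combinatorial bookkeeping over $\mathbb{Z}_{n-k}^{k}$ is the crux; once it is settled, the bandwidth count and the verification of Definition \ref{defi} are routine.
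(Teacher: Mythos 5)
Your proof follows essentially the same route as the paper's: the identical two-phase decoder (an aligned phase downloading $(p+k-1)\frac{L}{n-k}$ blocks that recovers $p\frac{L}{n-k}$ symbols of the failed node, followed by a residual phase costing $k$ blocks per remaining symbol via one parity node plus the $k-1$ systematic nodes) and the same block count yielding $\gamma(p)=kL-\frac{L}{n-k}(p-1)(k-1)$. Your write-up is more explicit about the index bookkeeping over $\mathbb{Z}_{n-k}^{k}$ and about checking the four conditions of Definition~\ref{defi}, but the underlying argument is the one in the paper.
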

\begin{proof}


We begin for the case  with $n=5$, $k=2$ and $L= 9$ to illustrate the basic idea of decoding that results in the progressive engagement prinaoperty.
 Following the permutation code structure in \cite{Viveck2011},  we reach to the code depicted in Table~
\ref{permutation_code} for $2$ systematic nodes and $3$ parity nodes.  $\lam_1$
and $\lam_2$ are two non-zero scalars in the field $\mathcal{F}(2^w)$ with $2^w
\geq 2k+1 = 5$ so that
\beq
\lam_1 + \lam_2 \neq 0 , \quad  \lam_1 \neq \lam_2.
\eeq
For simplicity, consider the case where S-node $1$ fails.  The original
recovery scheme only considers the case where all surviving nodes are engaged
to recover the single failure. In this case, the new node just needs to download
the first $\frac{1}{n-k}=\frac{1}{3}$ fraction of the total block rows of every
surviving node.  This fraction is indicated by the shaded gray region in Table \ref{permutation_code},
\ie S-nodes $2$ and P-nodes $1,2,3$.  The
downloaded blocks result in $12$ equations and $12$ unknowns, among which $9$
lost data $a(0),\cdots, a(8)$  can be readily obtained.
Hence,  the total repair-bandwidth $$\gamma(p=3) = (n-1)\times \frac{1}{n-k}  L
= 12 $$ suffices for recovering the lost data.  A similar recovery can be done
when S-node $2$ fails with the only difference that the block rows $l$ with $(l
\mod 3 =1)$ must be downloaded from every surviving node.

Next, we consider the case where only two parity nodes P-node $1$ and P-node
$2$ beside S-node $2$ are selected. In this case, the new node still downloads
the first three blocks of the available nodes S-node $2$ and P-nodes $1$ and $2$.
From these $9$ equations and $9$ unknowns, we can reconstruct the blocks of
vector $a$ appeared in the downloaded blocks, \ie $a(0), a(1), a(2)$ and
$a(3),a(4), a(5)$. The remaining blocks $a(6),a(7), a(8)$ can be obtained by
downloading the next three blocks of P-node $2$ and S-node $1$ (green region
in Table \ref{permutation_code}). This strategy results in $\gamma(p=2) = 3\times
\frac{L}{n-k} + 2\times \frac{L}{n-k}=15$ total repair bandwidth for $p=2$  or
$d= p + (k-1)=3$.  We note that while the achieved repair bandwidth may not be the minimum repair bandwidth possible, it is lower than the total
file size $M= 18$ blocks. We note that a similar strategy would also work if
the parity nodes other than P-node $1$ and P-node $2$ were selected. For
instance, for P-node $2$ and P-node $3$, in addition to the first $3$ rows of
these nodes and S-node $2$, by downloading the last three blocks of S-node $2$
and P-node $2$, we can reconstruct the whole $a(0), \cdots, a(8)$.
For $p=1$ we again download the first three blocks of that parity node along
with the first three blocks of S-node $2$, thus obtaining the respected
components of $a$ saved on those blocks. For the remaining $6$ blocks of $a$, we
can simply download the remaining $6$ blocks of S-node $2$ and the remaining $6$
blocks of the selected parity node, which in total results in $\gamma(p=1) = 2
\times \frac{L}{n-k} + 2 \times 3 \times \frac{L}{n-k} = 18$.

A similar argument can be made for the general case of the permutation codes. The proof of \eqref{repairbw}
for any permutation code with the modified recovery scheme is provided in Appendix \ref{pf_propos_App}.

\end{proof}
It remains unknown whether $\gamma(p)$ in \eqref{repairbw} is the optimal repair
bandwidth for the codes that support progressive engagement. The lower bound for
the exact repair with $d= p+ k-1$ participants is obtained in \cite{Diamakis2010}
\beq
\gamma_{min}(p) = \frac{L (p+k-1)}{p}. \label{lowerbound}
\eeq

In the simulation section, we provide a comparison between this lower bound and the repair bandwidth of our proposed codes.

\section{Simulation Results} \label{numerical_sec}
In this section, we provide some numerical experiments to assess the performance
of the rotation and  permutation codes.

As the first experiment, it would be insightful to compare the lower-bound \eqref{lowerbound} and the
repair bandwidth of the rotation code and permutation code with our modified recovery scheme in
\eqref{repairbw}. Fig. \ref{fig_repairBW} provides such a repair bandwidth
comparison and depicts the number of blocks versus the number of participating
parity nodes $p$ for  $(n,k) = (6,3)$. For completeness of
comparison, the repair bandwidth required by Reed-Solomon code is also included
in Fig. \ref{fig_repairBW}. Our proposed rotation code and the permutation code with our
modified recovery scheme significantly outperform the Reed-Solomon code in
terms of repair bandwidth saving. Furthermore, both of these codes satisfy the
progressive engagement property as their associated repair bandwidth drops as $p$
grows and do not need a new code structure and/or new node contents when p
changes. We also observe that our proposed rotation code results in lower repair bandwidth  compared to the permutation code with modified decoding when only two parity nodes are available for recovery. However, when all three parity nodes are involved, the permutation code requires the same bandwidth as the lower bound and lower than that of our proposed rotation code.
\begin{figure}
\centering
\includegraphics[scale=.65]{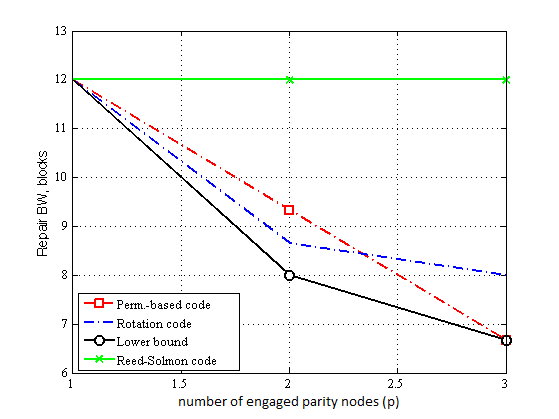}
\vspace{-3mm}
\caption{ Repair bandwidth versus the number of parity nodes participating in
recovery for the rotation code in \eqref{rotation_code} and permutation code with modified recovery scheme
and Reed-Solomon code $(n,k) =(4,2)$.} \label{fig_repairBW}
\vspace{-6mm}
\end{figure}

In the next experiment,  we consider
the $(n,k) = (10,3) $ permutation code and want to evaluate the speed of recovering the data of a failed node
versus the number of participating nodes for three symbol sizes $w=8,16, 32$. \footnote{This case is the default rate
for cloud file system Tahoe-LAFS \cite{Tahoe_Lafs}.}  We
implement this MDS code in C using the open source library for Galois Field
arithmetic in \cite{GF-complete}.
We assume an $M = 32$ MB file is distributed across $n=10$ storage nodes with $m=7$
parity nodes.  This follows $L = m^3 =343$ blocks of size roughly $\frac{M}{Lk}
= 32KB$. The speed here is defined as the amount of data recovered per second.
On our $1.8$ GHz Intel core i5, we obtain the results shown in Fig.
\ref{fig_speed_vs_p} using Monte Carlo simulation with $100$ iterations.  As
observed from this figure, the recovery speed increases as more parity nodes are engaged.
Moreover, although arithmetic calculation in $GF(2^{32})$ is more time-consuming
than $GF(2^8)$ and $GF(2^{16})$, the case corresponding to $w=32$ demonstrates
the highest speed among the three cases. This is due to the fact that for a fixed block size, the case of $w=32$ bits  has a lot less  symbols to process compared to the other cases.

In the previous experiment, the accessing cost of parity nodes was ignored in
favor of just showing the recovery speed of the code. In the next experiment, following the definition and terminologies of accessing costs in Appendix~\ref{optimization_section}, we
include the accessing cost in terms of the number of hops for a system whose $7$
parity nodes are located from $1$ to $7$ hops away, \ie $c_{1i}=i$.  For such a
system, Fig. \ref{fig_cost_vs_p} plots the total cost versus the number of
participating nodes for various coding schemes using two weights.  The proposed
solution in Section~\ref{system_model} readily obtains the optimal $p^*$ in our
code with progressive engagement property (denoted by Prog. Engag). This optimal
point clearly outperforms the Reed-Solomon code and the codes designed for
minimizing the repair bandwidth in terms of achieving a lower total cost.
Moreover, unlike our code and Reed-Solomon code, the codes designed for minimizing repair bandwidth  are just two
points in this plane instead of a curve and have no values for $p$ other than $p=(n-k) =7$.

\begin{figure}
\centering
\includegraphics[scale=.6]{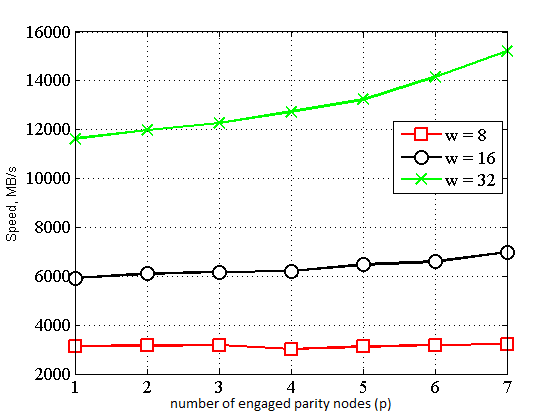}
\vspace{-3mm}
\caption{ CPU speed of single failure recovery versus the number of participating
nodes for permutation code with $(n,k) = (10,3)$} \label{fig_speed_vs_p}
\vspace{-7mm}
\end{figure}

\begin{figure}
\centering
\includegraphics[scale=.6]{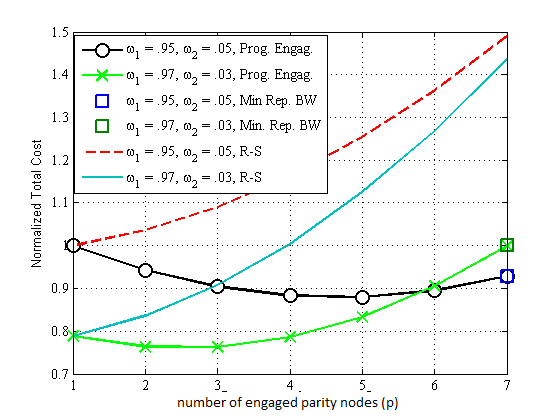}
\vspace{-2mm}
\caption{ Normalized total cost versus the number of participating nodes where
$(n,k) = (10,3)$ for various weights} \label{fig_cost_vs_p}
\vspace{-3mm}
\end{figure}

\section{Conclusions}  \label{conclusion_sec}
We introduced a new class of MDS
codes with a property called {progressive engagement}. This property provides
flexibility in engaging more surviving nodes in favor of reducing the repair
bandwidth without redesigning the code structure and content of the existing
nodes. We argued that such property is not met in the existing MDS codes. We further introduced a new class of MDS codes, called rotation codes, with progressive engagement property. Moreover, we illustrated how the existing permutation codes can provide progressive engagement by modifying the original recovery scheme. The repair bandwidth of the rotation codes and permutation codes were calculated and compared with other MDS codes.
\appendices

\section{Failure Recovery in the Presence of Accessing Cost} \label{optimization_section}
In this appendix, we provide the best strategy to recover the lost data of the failed node when both accessing cost and repair bandwidth matter.

It is not desirable to engage all the nodes in a cloud system
with a large number of geographically distributed nodes to recover the lost data. It is of our interest
to engage an optimal set of available nodes for the recovery process considering
their distances, link throughput as well as the total repair bandwidth.
Subsequently, we consider two specific cost metrics for an optimal selection
strategy; the accessing cost and the repair bandwidth cost.

The accessing cost is represented by the cost matrix $\mb{C} = [c_{ij}]_{n\times
n}$, where $c_{ij}$ denotes the cost associated with contacting and accessing
Node $j$ for recovering the data of Node $i$. The value of $c_{ij}$ can be a
function of multiple factors including communication delay, the number of hops
to reach Node $j$ and so on. The sum cost of all nodes of a set determines the
total accessing cost of that set.

Our objective is then to find $p^*$ nodes from the parity set P-nodes
$1,\cdots,m$ that minimize the weighted sum of the repair bandwidth and the
accessing cost. Note that the case where the objective is to minimize one of these
cost metrics subject to a constraint on the other metric can be converted to a weighted sum metric with Lagrange multipliers as the weights.
 To proceed, we define a binary indicator
$\quad  \alpha_i \in \{0,1\} ~, \quad \forall i=1,\cdots, m $
such that $\alpha_i$ is $1$ if P-node $i$ is selected, and $0$ otherwise.
Without loss of generality, assume that the systematic node S-node $1$ fails. Then, this
problem can be cast into the following integer programming problem
 \begin{align}
\min_{\alpha_{1},\cdots,\alpha_m} &\omega_1\sum_{i=1}^m \alpha_i c_{1i}  +
\omega_2 \gamma \left(\sum_{i=1}^m \alpha_i \right)\label{optimization} \\
s.t. \quad & I) \quad   \sum_{i=1}^m \alpha_i \geq 1   \\
& II) \quad  \alpha_i \in \{0,1\} ~, \quad i=1,\cdots, m, \nonumber
\end{align}
where the first and second terms in \eqref{optimization} encompass the
accessing cost and the repair bandwidth cost, respectively, $\omega_1$ and
$\omega_2$ are the respective weights normalized to $\omega_1+\omega_2 = 1$.
Condition $I$ in \eqref{optimization} requires at least one parity node to be
selected in order to ensure the reconstruction of the lost data for any MDS
code.

At the first glance, Problem \eqref{optimization} may seem to be hard to solve.
However, the solution becomes apparent by letting $\sum_{i=1}^m \alpha_i = p$.
Given $p$,  the optimization variables can be obtained from
 \begin{align}
A(p) = \min_{\alpha_{1},\cdots,\alpha_m} &w_1\sum_{i=1}^m \alpha_i c_{1i}
\label{optimization2} \\
s.t. \quad & I) \quad   \sum_{i=1}^m \alpha_i = p   \\
& II) \quad  \alpha_i \in \{0,1\} ~, \quad i=1,\cdots, m \nonumber
\end{align}
with solution
\beq
\alpha_i(p)^* =
\begin{cases}  1, & \quad \text{if } i \in \argmin_p (c_{11},\cdots, c_{1m}) \\
 0,  & \quad \text{otherwise}
\end{cases}~.
\label{Opt_alpha}
\eeq
where $\argmin_p (c_{11},\cdots, c_{1m})$ returns an index set of the first $p$
smallest $c_{11},\cdots, c_{1m}$ and can be easily done by sorting the accessing costs. 
Note that, so far, without any loss of generality, we have assumed the systematic node S-node $1$ has failed.
In order to avoid the above computation for each $p$, we
can just sort the accessing costs $c_{l1}, \cdots, c_{lm}$ for any failed node $l$ once with complexity $m\log
m$ using the quicksort algorithm and just return the index of the first $p$
elements.  Note that the $w_2 \gamma(p)$ was dropped from \eqref{optimization}
since, given $p$, it was a constant term. As $p$ can take only integer values
from $1$ to $m$, the solution to \eqref{optimization} can be obtained by
iterating over $p=1,\cdots, m$ and finding $p^*$ that achieves minimum $A(p) +
w_2 \gamma(p)$. Once $p^*$ is known, $\alpha_i(p^*) ^* $ determines whether
parity node $i$ is in the optimal selection set or not.  The overall search
complexity in this case is in the order of $O(m + m \log m)$.

The optimization procedure above leads us to a selection strategy that incrementally engages the parity
nodes according to their order of accessing cost.  After each new engagement, it
calculates the total cost of accessing and repair bandwidth.

\section{MDS Property of the Rotation Code in \eqref{rotation_code63}} \label{mds_property}
We need to prove that the coefficient matrix associated with any $3$ nodes selected from the total $6$ nodes is full rank. We consider the following cases:
\subsection*{Case 1: 3 Systematic Nodes Selected}
It is obvious that the coefficient matrix
associated with this selection is full rank.
\subsection*{Case 2: 2 Systemic Nodes and 1 Parity Node Selected}
This case is also trivial. For instance, consider S-node 1, S-node $2$ and P-node $3$ are selected. The rank of the coefficient matrix is
\begin{equation}
\text{rank} \left( \left[
\begin{matrix}
    \mb{I} & \mb{0}   & \mb{0} \\
    \mb{0}  & \mb{I}  & \mb{0} \\
    \mb{I} & 2^2\mb{R_2}  &3^2 \mb{R_3}
\end{matrix}
\right] \right) = Lk = 12
\end{equation}
which is full rank.
\subsection*{Case 3: $1$ Systemic Node and $2$ Parity Nodes selected}
First, let us verify the full rank coefficient matrix for a case where S-node $1$, P-node $2$ and P-node $3$ are selected. We need to show  that the determinant of the associated coefficient matrix is non-zero.
We have
\begin{align}
& \left|
\begin{matrix}
    \mb{I} & \mb{0}   & \mb{0} \\
    \mb{I}  & 2\mb{R_1}  & 3\mb{R_3} \\
    \mb{I} & 2^2\mb{R_2}  &3^2 \mb{R_1}
\end{matrix}
\right| =
\left|
\begin{matrix}
 2\mb{R_1}  & 3\mb{R_3} \\
2^2\mb{R_2}  &3^2 \mb{R_1}
\end{matrix}
\right|  =  \label{det_rotation_1}\\
& |2\mb{R_1}| \left| 3^2 \mb{R_1} - 2^2\mb{R_2}(2\mb{R_1})^{-1}3\mb{R_3} \right| = 6 | 3\mb{R_1} - 2\mb{I} | \neq 0  \label{det_rotation_2}
\end{align}
where from \eqref{det_rotation_1} to \eqref{det_rotation_2} we have used the following equation from the matrix theory
\beq
\left|
\begin{matrix}
    \mb{P} & \mb{S}   \\
    \mb{R}  & \mb{Q}
\end{matrix}
\right| = |\mb{P}|.|\mb{Q}-\mb{RP^{-1}S}|.
\eeq
Furthermore, equations $\mb{R_1}^{-1}= \mb{R_3}$, $\mb{R_2}\mb{R_3} = \mb{R_1}$  and $\mb{R_1}\mb{R_3} = \mb{I}$  were used to derive \eqref{det_rotation_2}.
A similar derivation is applicable to the other selection cases of $1$ systematic node and $2$ parity nodes.
\subsection*{Case 4: $3$ Parity Nodes Selected}
In this case, the determinant of the coefficient matrix will be
\begin{align}
& \left|
\begin{matrix}
    \mb{I} & \mb{I}   & \mb{I} \\
    \mb{I}  & 2\mb{R_1}  & 3\mb{R_3} \\
    \mb{I} & 2^2\mb{R_2}  &3^2 \mb{R_1}
\end{matrix}
\right| =
\left|
\begin{matrix}
 2\mb{R_1} -\mb{I}  & 3\mb{R_3} - \mb{I} \\
2^2\mb{R_2} - \mb{I} &3^2 \mb{R_1} - \mb{I}
\end{matrix}
\right|  =   |2\mb{R_1}-\mb{I}| \nn \\
& \left| 3^2 \mb{R_1} - \mb{I} - (2^2\mb{R_2} - \mb{I})(2\mb{R_1}-\mb{I})^{-1}(3\mb{R_3} - \mb{I}) \right| \neq 0 \label{det_rotation_3}
\end{align}
which can be easily verified to be non-zero.

\section{Proof of Proposition \ref{propos1} For General Case $(n,k)$ } \label{pf_propos_App}

We describe how the recovery scheme can be generalized to the case with
arbitrary $(n,k)$ and variable $p$. Without loss of generality, suppose
P-node $1$ fails. The recovery scheme involves two phases. In the first
phase, we download the same block rows of the given $p$ selected parity nodes and
the $k-1$ systematic nodes as the original recovery scheme would do with the
assumption of full participation. In total, we get $(k-1)\times \frac{L}{n-k} +
p \times \frac{L}{n-k}$ equations with the same number of unknowns, among which,
$p\times \frac{L}{n-k}$ components belong to Node $1$ and can be recovered. The
second phase is meant to recover the rest of the $((n-k) - p)\frac{L}{n-k}$
components of the parity Node $1$. For that, we consider any arbitrary parity node
(e.g. P-node $j$)  from the $p$ given parity nodes and all $k-1$ surviving
systematic nodes. We then download $((n-k) - p)\frac{L}{n-k}$ block rows from
P-node $j$  containing the components of P-node $1$ that were missing in the
first recovery phase. Next, we download all the components of these downloaded
vectors from their respective systematic nodes. This
would give us enough number of equations to recover the missing components of
P-nodes in the first phase. Hence the total repair bandwidth for a given $p$ is
\begin{align}
&\gamma(p)  = \big(p + (k -1) \big) \frac{L}{n-k} + k \bigg( \big((n-k) - p
\big) \bigg) \frac{L}{n-k} =  \nn\\
& \frac{-(p-1) (k-1) + k (n-k)}{n-k}L = kL - \frac{L}{n-k} (p-1)(k-1). \nn
\end{align}

\bibliographystyle{IEEEtran} 
\bibliography{IEEEabrv,ProgEng_2f}

\begin{thebibliography}{10}
\providecommand{\url}[1]{#1}
\csname url@rmstyle\endcsname
\providecommand{\newblock}{\relax}
\providecommand{\bibinfo}[2]{#2}
\providecommand\BIBentrySTDinterwordspacing{\spaceskip=0pt\relax}
\providecommand\BIBentryALTinterwordstretchfactor{4}
\providecommand\BIBentryALTinterwordspacing{\spaceskip=\fontdimen2\font plus
\BIBentryALTinterwordstretchfactor\fontdimen3\font minus
  \fontdimen4\font\relax}
\providecommand\BIBforeignlanguage[2]{{%
\expandafter\ifx\csname l@#1\endcsname\relax
\typeout{** WARNING: IEEEtran.bst: No hyphenation pattern has been}%
\typeout{** loaded for the language `#1'. Using the pattern for}%
\typeout{** the default language instead.}%
\else
\language=\csname l@#1\endcsname
\fi
#2}}

\bibitem{lale1}
L.~Donald, \emph{MCSA/MCSE 2003 JumpStart: Computer and Network Basics}.\hskip
  1em plus 0.5em minus 0.4em\relax Sybex, 2003.

\bibitem{LAB17}
V.~Pless, F.~MacWilliams, N.~Sloane, R.~Blahut, and R.~McEliece,
  \emph{Introduction To The Theory Of Error-correcting Codes, 3rd}, july 1998.

\bibitem{huang2012erasure}
C.~Huang, H.~Simitci, Y.~Xu, A.~Ogus, B.~Calder, P.~Gopalan, J.~Li,
  S.~Yekhanin, \emph{et~al.}, ``Erasure coding in windows azure storage,'' in
  \emph{USENIX ATC}, 2012.

\bibitem{google_gfs}
S.~Ghemawat, H.~Gobioff, and S.-T. Leung, ``The {G}oogle file system,'' in
  \emph{ACM SIGOPS Operating Systems Review}, vol.~37, no.~5, 2003.

\bibitem{DimakisSurvey2011}
A.~Dimakis, K.~Ramchandran, Y.~Wu, and C.~Suh, ``A survey on network codes for
  distributed storage,'' \emph{Proceedings of the IEEE}, vol.~99, no.~3, pp.
  476--489, 2011.

\bibitem{reed1960polynomial}
I.~S. Reed and G.~Solomon, ``Polynomial codes over certain finite fields,''
  \emph{J. of the Society for Industrial \& Applied Math.}, vol.~8, no.~2, pp.
  300--304, June 1960.

\bibitem{corbett2004row}
P.~Corbett, B.~English, A.~Goel, T.~Grcanac, S.~Kleiman, J.~Leong, and
  S.~Sankar, ``Row-diagonal parity for double disk failure correction,'' in
  \emph{Proceedings of the 3rd USENIX Conference on File and Storage
  Technologies({FAST})}, 2004.

\bibitem{blaum1995evenodd}
M.~Blaum, J.~Brady, J.~Bruck, and J.~Menon, ``Evenodd: An efficient scheme for
  tolerating double disk failures in {RAID} architectures,'' \emph{IEEE Trans.
  on Computer}, vol.~44, no.~2, pp. 192--202, Feb. 1995.

\bibitem{xu1999x}
L.~Xu and J.~Bruck, ``X-code: {MDS} array codes with optimal encoding,''
  \emph{{IEEE} Trans. Inform. Theory}, vol.~45, no.~1, pp. 272--276, 1999.

\bibitem{hafner2005weaver}
J.~L. Hafner, ``{WEAVER} codes: Highly fault tolerant erasure codes for storage
  systems,'' in \emph{Proc. of the 4th {USENIX} on Conference on File and
  Storage Tech. ({FAST})}, vol.~5.\hskip 1em plus 0.5em minus 0.4em\relax
  Berkeley, CA, USA: USENIX Association, March 2005.

\bibitem{Luby02ltcodes}
M.~Luby, ``{LT} codes,'' Digital Fountain, Inc. luby@digitalfountain.com, 2002.

\bibitem{shokrollahi2006raptor}
A.~Shokrollahi, ``Raptor codes,'' \emph{{IEEE} Trans. Inform. Theory}, vol.~52,
  no.~6, pp. 2551--2567, June 2006.

\bibitem{plank2003practical}
J.~S. Plank and M.~G. Thomason, ``On the practical use of {LDPC} erasure codes
  for distributed storage applications,'' \emph{University of Tennessee, Tech.
  Rep. CS-03-510}, 2003.

\bibitem{wylie2007}
J.~J. Wylie and R.~Swaminathan, ``Determining fault tolerance of {XOR}-based
  erasure codes efficiently,'' in \emph{37th Annual IEEE/IFIP International
  Conf. on Dependable Systems and Networks}, June 2007.

\bibitem{wylie2011}
J.~J. Wylie, ``Finding the most fault-tolerant flat {XOR}-based erasure codes
  for storage systems,'' in \emph{45th Signals, Systems and Computers
  (ASILOMAR) Conference}.\hskip 1em plus 0.5em minus 0.4em\relax IEEE, 2011.

\bibitem{oggier2011self}
F.~Oggier and A.~Datta, ``Self-repairing codes for distributed storage
  projective geometric construction,'' in \emph{Information Theory Workshop
  (ITW), IEEE}, 2011.

\bibitem{papailiopoulos2014locally}
D.~S. Papailiopoulos and A.~G. Dimakis, ``Locally repairable codes,''
  \emph{Information Theory, IEEE Transactions on}, vol.~60, no.~10, pp.
  5843--5855, 2014.

\bibitem{WuDimakis2009}
Y.~Wu and A.~G. Dimakis, ``Reducing repair traffic for erasure coding-based
  storage via interference alignment,'' in \emph{Proc. {IEEE} Int. Symp. on
  Infor. Theory (ISIT)}, ser. ISIT'09, 2009, pp. 2276--2280.

\bibitem{Diamakis2010}
A.~Dimakis, P.~Godfrey, Y.~Wu, M.~Wainwright, and K.~Ramchandran, ``Network
  coding for distributed storage systems,'' \emph{{IEEE} Trans. Inform.
  Theory}, vol.~56, no.~9, pp. 4539--4551, 2010.

\bibitem{Shah2010}
N.~Shah, K.~Rashmi, P.~Kumar, and K.~Ramchandran, ``Explicit codes minimizing
  repair bandwidth for distributed storage,'' in \emph{Information Theory
  Workshop (ITW)}, 2010.

\bibitem{Viveck2011}
V.~R. Cadambe, C.~Huang, S.~A. Jafar, and J.~Li, ``Optimal repair of {MDS}
  codes in distributed storage via subspace interference alignment,''
  \emph{CoRR}, vol. abs/1106.1250, 2011.

\bibitem{ViveckJafarMaleki2010}
V.~R. Cadambe, S.~A. Jafar, and H.~Maleki, ``Distributed data storage with
  minimum storage regenerating codes - exact and functional repair are
  asymptotically equally efficient,'' \emph{Computing Research Repository
  ({CORR})}, vol. abs/1004.4299, 2010.

\bibitem{tamo2013zigzag}
I.~Tamo, Z.~Wang, and J.~Bruck, ``Zigzag codes: {MDS} array codes with optimal
  rebuilding,'' \emph{{IEEE} Trans. Inform. Theory}, vol.~59, no.~3, pp.
  1597--1616, 2013.

\bibitem{Error_control_book}
S.~Lin and D.~J. Costello, \emph{Error Control Coding, Second Edition}.\hskip
  1em plus 0.5em minus 0.4em\relax Upper Saddle River, NJ, USA: Prentice-Hall,
  Inc., 2004.

\bibitem{rashmi2011optimal}
K.~V. Rashmi, N.~B. Shah, and P.~V. Kumar, ``Optimal exact-regenerating codes
  for distributed storage at the {MSR} and {MBR} points via a product-matrix
  construction,'' \emph{{IEEE} Trans. Inform. Theory}, vol.~57, no.~8, pp.
  5227--5239, Aug. 2011.

\bibitem{Changho2010}
C.~Suh and K.~Ramchandran, ``Exact regeneration codes for distributed storage
  repair using interference alignment,'' \emph{CoRR}, vol. abs/1001.0107, 2010.

\bibitem{prakash2015storage}
N.~Prakash and M.~N. Krishnan, ``The storage-repair-bandwidth trade-off of
  exact repair linear regenerating codes for the case $ d= k= n-1$,''
  \emph{arXiv preprint arXiv:1501.03983}, 2015.

\bibitem{tian2013rate}
C.~Tian, ``Rate region of the (4, 3, 3) exact-repair regenerating codes,'' in
  \emph{Proc. {IEEE} Int. Symp. on Infor. Theory (ISIT)}.\hskip 1em plus 0.5em
  minus 0.4em\relax IEEE, 2013, pp. 1426--1430.

\bibitem{rashmi2015having}
K.~Rashmi, P.~Nakkiran, J.~Wang, N.~B. Shah, and K.~Ramchandran, ``Having your
  cake and eating it too: Jointly optimal erasure codes for {I/O}, storage, and
  network-bandwidth,'' in \emph{Proc. of the 4th {USENIX} on Conference on File
  and Storage Tech. ({FAST})}.\hskip 1em plus 0.5em minus 0.4em\relax USENIX
  Association, 2015.

\bibitem{rashmi2014hitchhiker}
K.~Rashmi, N.~B. Shah, D.~Gu, H.~Kuang, D.~Borthakur, and K.~Ramchandran, ``A
  hitchhiker's guide to fast and efficient data reconstruction in erasure-coded
  data centers,'' in \emph{Proceedings of the 2014 ACM conference on
  SIGCOMM}.\hskip 1em plus 0.5em minus 0.4em\relax ACM, 2014, pp. 331--342.

\bibitem{dimakis2011survey}
A.~G. Dimakis, K.~Ramchandran, Y.~Wu, and C.~Suh, ``A survey on network codes
  for distributed storage,'' \emph{Proceedings of the IEEE}, vol.~99, no.~3,
  pp. 476--489, 2011.

\bibitem{tian2013exact}
C.~Tian, V.~Aggarwal, and V.~A. Vaishampayan, ``Exact-repair regenerating codes
  via layered erasure correction and block designs,'' \emph{arXiv preprint
  arXiv:1302.4670}, 2013.

\bibitem{khan2012rethinking}
O.~Khan, R.~Burns, J.~Plank, W.~Pierce, and C.~Huang, ``Rethinking erasure
  codes for cloud file systems: Minimizing {I/O} for recovery and degraded
  reads,'' in \emph{Proc. of USENIX FAST}, 2012.

\bibitem{Tahoe_Lafs}
\BIBentryALTinterwordspacing
``Tahoe-lafs documentation,'' tahoe-lafs.org, Tech. Rep., 2013. [Online].
  Available: \url{https://tahoe-lafs.org/trac/tahoe-lafs/wiki/Doc}
\BIBentrySTDinterwordspacing

\bibitem{GF-complete}
J.~S. Plank, E.~L. Miller, and W.~B. Houston, ``{GF-Complete}: A comprehensive
  open source library for {G}alois {F}ield arithmetic,'' University of
  Tennessee, Tech. Rep. UT-CS-13-703, Jan. 2013.

\end{thebibliography}
\end{document}